%%%%%%%%%%%%%%%%%%%%%%%%%%%%%%%%%%%%%%%%%%%%%%%%%%%%%%%%%%%%%%%%%%%%%%%%%%%%%%%%
%2345678901234567890123456789012345678901234567890123456789012345678901234567890
%        1         2         3         4         5         6         7         8

\documentclass[letterpaper, 10 pt, conference]{ieeeconf}  % Comment this line out if you need a4paper
\usepackage{amsmath,amssymb,bm,mathtools}
\usepackage[noadjust]{cite}
\usepackage[capitalize]{cleveref}
\usepackage[norelsize,ruled,linesnumbered,noend]{algorithm2e}
\usepackage{xcolor}
\usepackage{fp, tikz, pgfplots}
\usepackage[footnotesize, skip=0pt]{caption}
\usepackage{accents,interval}
\intervalconfig{soft open fences,separator symbol =; }
\usetikzlibrary{arrows}
\usetikzlibrary{shapes}
\usetikzlibrary{patterns}
\usetikzlibrary{decorations.pathmorphing}
\usetikzlibrary{decorations.pathreplacing}
\usetikzlibrary{decorations.shapes}
\usetikzlibrary{decorations.text}
\usetikzlibrary{shapes.misc}
\usetikzlibrary{decorations.markings}
\usetikzlibrary{arrows.meta}
\usetikzlibrary{backgrounds}
\usepgfplotslibrary{fillbetween}

% correct bad hyphenation here
\hyphenation{op-tical net-works semi-conduc-tor}

\newtheorem{definition}{Definition}
\newtheorem{assumption}{Assumption}
\newtheorem{theorem}{Theorem}
\newtheorem{lemma}{Lemma}

\newtheorem{corollary}{Corollary}

\newcommand{\ubar}[1]{\underaccent{\bar}{#1}}

\makeatletter
\newcommand{\removelatexerror}{\let\@latex@error\@gobble}
\makeatother

\makeatletter
\newcommand{\subalign}[1]{%
	\vcenter{%
		\Let@ \restore@math@cr \default@tag
		\baselineskip\fontdimen10 \scriptfont\tw@
		\advance\baselineskip\fontdimen12 \scriptfont\tw@
		\lineskip\thr@@\fontdimen8 \scriptfont\thr@@
		\lineskiplimit\lineskip
		\ialign{\hfil$\m@th\scriptstyle##$&$\m@th\scriptstyle{}##$\crcr
			#1\crcr
		}%
	}
}
\makeatother

\pgfplotsset{colormap={blueyellow}{rgb(0cm)=(0,0,1); rgb(1cm)=(1,1,0)}}

\pgfplotsset{width=9\columnwidth /10, compat = 1.13, 
	height = 9\columnwidth /10, grid= major, 
	legend cell align = left, ticklabel style = {font=\scriptsize},
	every axis label/.append style={font=\small},
	legend style = {font=\scriptsize},title style={yshift=0pt, font = \small} }

\tikzset{cross/.style={cross out, draw=black, minimum size=10*(#1-\pgflinewidth), inner sep=0pt, outer sep=0pt},cross/.default={1pt}}

\tikzset{
	myarrow/.style={-{Triangle[length=2mm,width=2mm]}}
}

\allowdisplaybreaks
\setlength{\textfloatsep}{6pt}
\setlength{\floatsep}{6pt}

\IEEEoverridecommandlockouts                              % This command is only needed if 
                                                          % you want to use the \thanks command

\overrideIEEEmargins                                      % Needed to meet printer requirements.

\title{\LARGE \bf
How Training Data Impacts Performance in Learning-based Control  
}

\author{Armin Lederer, Alexandre Capone, Jonas Umlauft, Sandra Hirche% <-this % 
%%stops a space
\thanks{*This work was supported by the European Research Council (ERC)
Consolidator Grant ``Safe data-driven control for human-centric systems (CO-MAN)'' under grant agreement number 864686. 
A. L. gratefully acknowledges financial support from
the German Academic Scholarship Foundation.}% <-this % stops a space
\thanks{All authors are with the Department of Electrical and Computer Engineering, Technical University of Munich, 80333 Munich, Germany
	{\tt\small [armin.lederer, alexandre.capone, jonas.umlauft, 
	hirche]@tum.de}}%
}

\begin{document}

\maketitle
\thispagestyle{empty}
\pagestyle{empty}

%%%%%%%%%%%%%%%%%%%%%%%%%%%%%%%%%%%%%%%%%%%%%%%%%%%%%%%%%%%%%%%%%%%%%%%%%%%%%%%%
\begin{abstract}

When first principle models cannot be derived due to the complexity of the real 
system, data-driven methods allow us to build models from system observations. 
As these models are employed in learning-based control, the 
quality of the data plays a crucial role for the performance of the resulting 
control law. Nevertheless, there hardly exist measures for assessing training 
data sets, and the impact of the distribution of the data on the closed-loop 
system properties is largely unknown. This paper derives --- 
based on Gaussian process models --- an analytical relationship between the density of the 
training data and the control performance. We formulate a quality 
measure for the data set, which we refer to as $\rho$-gap, and derive the ultimate bound for the tracking 
error under consideration of the model uncertainty. We show how the $\rho$-gap 
can be applied to a feedback linearizing control law and 
provide numerical illustrations for our approach.\looseness=-1
\end{abstract}

%%%%%%%%%%%%%%%%%%%%%%%%%%%%%%%%%%%%%%%%%%%%%%%%%%%%%%%%%%%%%%%%%%%%%%%%%%%%%%%%
\section{INTRODUCTION}
Model-based control requires an accurate mathematical description of the plant that is to be controlled. Classical system identification methods postulate parametric models using prior assumptions, and tune their parameters based on observations to achieve high model accuracy~\cite{Ljung1998}. However, these methods are prone to yielding poor models if a wrong parametric structure is assumed, i.e., if insufficient prior knowledge on the system structure is available. This is often the case for highly complex systems, e.g., settings where humans are part of the control loop. In order to overcome these shortcomings, learning-based control employs non-parametric data-driven models, which only require little prior system knowledge in comparison to classical parametric models~\cite{Kocijan2004}. Such modeling techniques 
strongly rely on (potentially noisy) observations, making a formal analysis of 
the resulting control performance difficult. Hence, these techniques remain too unreliable for safety-critical applications~\cite{Fisac2019}. \looseness=-1

To overcome this drawback, recent research has focused on the stability of 
learning-based control approaches using Gaussian process (GP) 
models~\cite{Umlauft2017a}. GPs can capture model uncertainty, which allows us to derive probabilistic model 
error bounds~\cite{Lederer2019}. The applications of GP-based methods range from safe 
controller optimization for quadrotors~\cite{Berkenkamp2016} to computed torque 
control in robotics~\cite{Beckers2019} and feedback linearization 
for aircraft systems~\cite{Chowdhary2015}.\looseness=-1

Despite the widespread use of GPs in control, there exist only a few tools to assess 
the quality of the training set. So far, most measures used to quantify data 
quality have been based on information-theoretical measures, e.g., 
information gain~\cite{Koller2018,Alpcan2015}. These techniques 
assess data in global terms, without taking into account locally varying 
requirements on the data due to the control structure and the task. 
Since the relationship between data distributions and control performance is largely 
unknown, random sampling-based approaches have recently been employed to estimate the effect of 
data on learning-based control systems \cite{Capone2020}. However, sampling-based approaches 
are computationally expensive and provide no direct insight into the interrelation between 
training data and control error. Therefore, deriving an analytical measure is crucial to 
improve our understanding of this relationship, and is essential to enhance the efficiency of exploration in
active learning, training data selection to implement machine learning with limited computational budget, 
and cautious control design.\looseness=-1

The main contribution of this paper is a novel measure, called $\rho$-gap, 
to assess training data sets from a control theoretical perspective. Based on 
the model uncertainty of a GP model, we investigate the uncertainty-dependent Lyapunov 
stability conditions for a control-affine closed-loop system. This analysis allows 
insights on how data should be collected, which is becoming particularly useful in 
exploration tasks where high data-efficiency is required. As an example, we 
derive a novel uncertainty-dependent ultimate bound of the tracking error 
for a feedback linearizing control law and show how the density of the training
data affects this bound.\looseness=-1

The paper is structured as follows: \Cref{sec:problem} defines the problem 
setting, after which GP regression and the required model 
error bounds are introduced in  \cref{sec:GPR}. 
The control law is presented in \cref{sec:Control}, including the derivation of 
the ultimate bound and the proposed quality measure of the data. The results 
are numerically illustrated in \cref{sec:simulation}, followed by the 
conclusion in \cref{sec:conclusion}.

\section{PROBLEM STATEMENT}
\label{sec:problem}
We consider a single-input system in the canonical form\footnote{Notation: Lower/upper case bold
	symbols denote vectors/matrices,~$\mathbb{R}_{+,0}$/$\mathbb{R}_+$ all real positive numbers 
	with/without zero, respectively.~$\bm{I}_n$ denotes the~$n\times n$ identity matrix and 
	$\|\cdot\|$ the Euclidean norm.\looseness=-1}
\begin{align}
\label{eq:sys}
	\dot{x}_1=x_2, \quad \dot{x}_2=x_3, \quad \ldots \quad 
	\dot{x}_{d_x}=f(\bm{x})+g(\bm{x})u
\end{align}
with state~$\bm{x}=[x_1\quad \cdots\quad x_{d_x}]^T\in\mathbb{X}$ in a compact set $\mathbb{X}\subset\mathbb{R}^{d_x}$, 
input~\mbox{$u\in\mathbb{U}=\mathbb{R}$}, and unknown functions~$f\colon 
\mathbb{X}\to \mathbb{R}$ 
and~$g\colon 
\mathbb{X}\to \mathbb{R}$. 
Note that we restrict the following analysis to 
single-input systems due to notational convenience, but our results directly 
extend to multi-input control-affine systems. 
We assume that prior models~$\hat{f}\colon 
\mathbb{X}\to \mathbb{R}$ and~$\hat{g}\colon 
\mathbb{X}\to \mathbb{R}$ of the unknown functions
are given, and make the following 
assumptions on the unknown functions and the available data:\looseness=-1
\begin{assumption}
	\label{ass:data}
	A data set 
	\begin{align}
		\mathbb{D}=\left\{\bm{z}^{(n)}\coloneqq\begin{bmatrix}
		\bm{x}^{(n)}\\ u^{(n)}
		\end{bmatrix}, 
		y^{(n)}\!=\!\Delta\left(\bm{z}^{(n)}\right)\!+\!\omega^{(n)}\right\}_{n=1}^N
	\end{align}
	is available, which contains~$N$ pairs of
	noiseless measurements of the state~$\bm{x}^{(n)}$ and noisy measurements
	\begin{align}
	\label{eq:Delta}
		\Delta(\bm{z})=\Delta(\bm{x},u)\!=\!f(\bm{x})\!-\!\hat{f}(\bm{x})\!+\!(g(\bm{x})\!-\!\hat{g}(\bm{x}))u\!
	\end{align}\\\vspace{-0.8cm}\\
	perturbed by Gaussian 
	noise~$\omega^{(n)}\sim\mathcal{N}(0,s_{\mathrm{on}}^2)$, where~$\bm{z}\in 
	\mathbb{X}\times\mathbb{U}\subset \mathbb{R}^{d_z}$, and~$d_z = d_x 
	+1$.
\end{assumption}
\begin{assumption}
	\label{ass:prior}
	The unknown functions~$f(\cdot)$ and~$g(\cdot)$ admit Lipschitz  
	constants~$L_f$ and~$L_g$, respectively.\looseness=-1
\end{assumption}
\begin{assumption}
	\label{ass:fp}
	The sign of~$g(\cdot)$ is known and constant.
\end{assumption}

While Assumptions~\ref{ass:data} and~\ref{ass:prior} guarantee the existence of 
training data, and ensure that the unknown functions are well-behaved, \cref{ass:fp}  
guarantees global controllability, and thereby 
the existence of a stabilizing control law.
We consider the task of tracking a bounded reference trajectory\looseness=-1
\begin{align}
	\bm{x}_d=\begin{bmatrix}
	x_d&\dot{x}_d&\ldots&\frac{\mathrm{d}^{d_x-1}x_d}{\mathrm{d}t^{d_x-1}}
	\end{bmatrix}^T,
\end{align}
which can be seen as the generalization of set point 
regularization, to which our results carry over. We employ 
a control law $\pi:\mathbb{X}\rightarrow\mathbb{U}$, 
whose goal is to stabilize the tracking
error~$\bm{e}=\bm{x}-\bm{x}_d$ with the dynamics
\begin{align}
\label{eq:edyn}
\dot{e}_1=e_2,\quad \ldots \quad 
\dot{e}_{d_x}=f(\bm{x})+g(\bm{x})\pi(\bm{x})-\frac{\mathrm{d}^{d_x-1}x_d}{\mathrm{d}t^{d_x-1}}.
\end{align}
%The control law $\pi(\cdot)$  is based on a model learned by GP regression. In order 
%to exploit the structure of \eqref{eq:sys} in the control law, we reflect it in the GP
%through the composite kernel\looseness=-1
The control law $\pi(\cdot)$  
is based on a model learned by GP regression with a 
composite kernel
\begin{align}
\label{eq:compkern} 
	k(\bm{z},\bm{z}')=k_f(\bm{x},\bm{x}')+uk_g(\bm{x},\bm{x}')u',
\end{align}
and covariance functions~$k_f,k_g:\mathbb{R}^{d_x}\times\mathbb{R}^{d_x}\rightarrow\mathbb{R}$ 
as suggested in \cite{Umlauft2020}. Since this kernel reflects the structure of \eqref{eq:sys}, it allows us
to recover separate models for~$f(\cdot)$ and~$g(\cdot)$~\cite{Duvenaud2014}.
While other kernel functions
can also be used for a learning-based control approach \cite{Chowdhary2015}, 
they generally do not allow a separation of model components. Since this separation
is beneficial for the interpretability and intuitiveness of the relationship between training
data and control performance, we focus on composite kernels in the following.
The individual covariance functions
represent our prior knowledge about the unknown functions.
\begin{assumption}
	\label{ass:gp}
	Prior knowledge of the function is expressed through prior 
	GPs, i.e.,~$f(\cdot)\sim\mathcal{GP}(\hat{f}(\bm{x}),k_f(\bm{x},\bm{x}'))$ and
	$g(\cdot)\sim\mathcal{GP}(\hat{g}(\bm{x}),k_g(\bm{x},\bm{x}'))$.
\end{assumption}

This assumption imposes a probability distribution on the function space, which is shaped by the 
prior mean functions~$\hat{f}(\cdot),\hat{g}(\cdot)$ and the kernel functions 
$k_f(\cdot,\cdot),k_g(\cdot,\cdot)$. Thereby, it implicitly requires 
that the covariance kernels and prior mean functions are chosen suitably, i.e., 
$f(\cdot),g(\cdot)$ must be expressible in terms of those functions~\cite{Lederer2019}.
 
While the stability of control laws has been investigated 
under these assumptions \cite{Chowdhary2015, Umlauft2020}, the derived ultimate bounds do not depend on 
the training data. Hence, the impact of training data on the performance of the learning-based controller is 
unknown. We address this issue by developing  
a flexible measure of the quality of training data with respect to the control performance. In order to illustrate the flexibility
of the proposed quality measure, we derive a novel 
uncertainty-dependent ultimate tracking error bound
for feedback linearizing control of systems with
both $f(\cdot)$ and $g(\cdot)$ unknown, and apply
our quality measure to this problem.\looseness=-1

\section{GAUSSIAN PROCESS REGRESSION}
\label{sec:GPR}

A Gaussian process defines a  
distribution~$\mathcal{GP}(\mu_0(\bm{z}),k(\bm{z},\bm{z}'))$ over 
functions~$h:\mathbb{R}^{d_z}\rightarrow \mathbb{R}$ with prior 
mean~$\mu_0:\mathbb{R}^{d_z}\rightarrow\mathbb{R}$ and covariance~$k:\mathbb{R}^{d_z}\times\mathbb{R}^{d_z}
\rightarrow \mathbb{R}$, such that any finite number of evaluation points 
$\{\bm{z}_1,\ldots,\bm{z}_m\}$,~$m\in\mathbb{N}$ is assigned a Gaussian 
distribution~\cite{Rasmussen2006}. 
The prior mean function 
incorporates known parametric models into the regression, while the kernel~$k(\cdot,\cdot)$ 
encodes information about the structure of~$h(\cdot)$. Due to the structure of the  
modeling error~$h(\cdot)=\Delta(\cdot)$ in~\eqref{eq:Delta}, 
we employ a prior mean function 
$\mu_0(\bm{z})=\hat{f}(\bm{x})+\hat{g}(\bm{x})u$ and the composite kernel~$k(\cdot,\cdot)$
defined in \eqref{eq:compkern}. For the components of the composite kernel~$k(\cdot,\cdot)$,
we use squared exponential kernels~$k_f(\cdot,\cdot)$ and~$k_g(\cdot,\cdot)$,
defined as
\begin{align}
k_{f,g}(\bm{x},\bm{x}')=s_{f,g}^2\exp\left(-\frac{\|\bm{x}-\bm{x}'\|}{2l_{f,g}^2}\right),
\label{eq:SE}
\end{align}
where~$s_f^2,s_g^2\in\mathbb{R}_{+,0}$ and~$l_f,l_g\in\mathbb{R}_+$ denote the signal variances and 
length scales, respectively. 
Using the covariance function~$k(\cdot,\cdot)$, the 
elements of the data covariance matrix~$\bm{K}$ and the 
kernel vector~$\bm{k}(\bm{z})$ at a test point~$\bm{z}$ are given by~$K_{nn'}=k(\bm{z}^{(n)},\bm{z}^{(n')})$ 
and~$k_{n}(\bm{z})=k(\bm{z}^{(n)},\bm{z})$, respectively. Based on these definitions, 
the probability of~$h(\bm{z})$
conditioned on the training data~$\mathbb{D}$ as well as the test point~$\bm{z}$
is Gaussian with mean and variance
\begin{align}
\label{eq:mean}
\mu(\bm{z})&= \mu_0(\bm{z})+\bm{k}^T(\bm{z})\left(\bm{K}+s_{\mathrm{on}}^2\bm{I}_N\right)^{-1}\bm{y}\\
\sigma^2(\bm{z})&=k(\bm{z},\bm{z})-\bm{k}^T(\bm{z})(\bm{K}+s_{\mathrm{on}}^2\bm{I}_N)^{-1}\bm{k}(\bm{z}),
\end{align}
where the training outputs~$y^{(n)}$
are concatenated in the target vector~$\bm{y}=\begin{bmatrix}
y^{(1)}&\ldots&y^{(N)}
\end{bmatrix}^T$. Due to the definition of the composite kernel \eqref{eq:compkern}, we
can express the kernel vector as~$\bm{k}(\bm{z})=\bm{k}_f(\bm{x})+\bm{U}\bm{k}_g(\bm{x})u$,
where~$\bm{U}=\mathrm{diag}([u^{(1)}\ \ldots \ u^{(N)}])$ and the elements of~$\bm{k}_f(\cdot)$ 
and~$\bm{k}_g(\cdot)$ are defined as~$k_{f,n}(\bm{x})=k_f(\bm{x}^{(n)},\bm{x})$ 
and~$k_{g,n}(\bm{x})=k_g(\bm{x}^{(n)},\bm{x})$, respectively. 
By exploiting this structure, it is possible 
to recover the posterior GPs for~$f(\cdot)$ and~$g(\cdot)$ from the regression, and derive 
probabilistic uniform error bounds that
depend on the posterior standard deviation, as shown in the following theorem:\looseness=-1
\begin{lemma}
	\label{th:gpEbound}
	Consider a GP with composite kernel given by~\eqref{eq:compkern}, 
	a training data set~$\mathbb{D}$ and functions~$f(\cdot)$,~$g(\cdot)$,~$\hat{f}(\cdot)$
	and~$\hat{g}(\cdot)$ satisfying Assumptions~\ref{ass:data}, \ref{ass:prior} and \ref{ass:gp}.
	For any~$\delta\in(0,1)$ and~$\tau\in\mathbb{R}_+$, it holds that
	\begin{align}
	\label{eq:fbound}
		P\!\left(\! |f\!(\bm{x})\!-\!\mu_{\!f}\!(\bm{x})|\!\leq\! \sqrt{\!\beta\!(\tau)}\sigma_{\!f}\!(\bm{x})\!+\!\gamma_{\!f}\!(\tau)\!,\!\forall\bm{x}\!\in\!\mathbb{X} \!\right)&\!\geq\! 1\!-\!\delta\!
	\end{align}\vspace*{-0.4cm}%
	\begin{align}		
	\!P\!\left(\! |g(\bm{x})\!-\!\mu_{\!g}\!(\bm{x})|\!\leq\! \sqrt{\!\beta\!(\tau)}\sigma_{\!g}\!(\bm{x})\!+\!\gamma_{\!g}\!(\tau)\!,\!\forall\bm{x}\!\in\!\mathbb{X} \!\right)&\!\geq\! 1\!-\!\delta\!,\!
		\label{eq:gbound}
	\end{align}
	with mean and variance components
	\begin{align}
	\label{eq:muf}
		\mu_f(\bm{x})&\!=\!\hat{f}(\bm{x})\!+\!\bm{k}_f^T(\bm{x})(\bm{K}\!+\!s_{\mathrm{on}}^2\bm{I}_N)^{-1}\bm{y}\\
		\label{eq:mug}
		\mu_g(\bm{x})&\!=\!\hat{g}(\bm{x})\!+\!\bm{k}_g^T(\bm{x})\bm{U}(\bm{K}\!+\!s_{\mathrm{on}}^2\bm{I}_N)^{-1}\bm{y}\\
		\label{eq:sigf}
		\sigma_f^2(\bm{x})&\!=\!k_f(\bm{x},\bm{x})\!-\!\bm{k}_f^T(\bm{x})(\bm{K}\!+\!s_{\mathrm{on}}^2\bm{I}_N)^{-1}\bm{k}_f(\bm{x})\\
		\label{eq:sigg}
		\sigma_g^2(\bm{x})&\!=\!k_g(\bm{x},\bm{x})\!-\!\bm{k}_g^T(\bm{x})\bm{U}(\bm{K}\!+\!s_{\mathrm{on}}^2\bm{I}_N)^{-1}
		\bm{U}\bm{k}_g(\bm{x}),
	\end{align}
	and parameters 
	\begin{align}
	\beta(\tau)&=2d_x\log\left(1+\frac{r_0}{\tau}\right)-2\log(\delta)\\
	\gamma_f(\tau)&=(L_{\mu_f}+L_f)\tau+\sqrt{\beta(\tau)L_{\sigma_f^2}\tau}\\
	\gamma_g(\tau)&=(L_{\mu_g}+L_g)\tau+\sqrt{\beta(\tau)L_{\sigma_g^2}\tau}.
	\end{align}
	Here,~$L_{\mu_f}$,~$L_{\mu_g}$,~$L_{\sigma_f^2}$ and~$L_{\sigma_g^2}$ are the 
	Lipschitz constants of the mean and variance components, respectively, 
	and~$r_0=\max_{\bm{x},\bm{x}'\in\mathbb{X}}\|\bm{x}-\bm{x}'\|$ denotes 
	the maximum diameter of~$\mathbb{X}$.\looseness=-1
\end{lemma}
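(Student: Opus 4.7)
The plan is to combine two ingredients: (i) the fact that the composite kernel~\eqref{eq:compkern} lets us recover separate posterior GPs for~$f(\cdot)$ and~$g(\cdot)$ from the joint GP on~$\Delta(\cdot)$, and (ii) the standard uniform-error argument for GP regression on a compact set, which upgrades a pointwise Gaussian tail bound to a bound valid on all of~$\mathbb{X}$ via a covering of radius~$\tau$ and Lipschitz continuity of the quantities involved.

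First I would make the model decomposition explicit. Under \cref{ass:gp}, the centered priors~$\tilde f = f-\hat f$ and~$\tilde g = g-\hat g$ are independent zero-mean GPs with kernels~$k_f$ and~$k_g$, and the observed modelling error~$\Delta(\bm x, u) = \tilde f(\bm x) + u\,\tilde g(\bm x)$ is linear in them. Computing the cross-covariances between~$\tilde f(\bm x)$ (resp.~$\tilde g(\bm x)$) and the observations~$y^{(n)}$ then yields cross-covariance vectors~$\bm k_f(\bm x)$ and~$\bm U\bm k_g(\bm x)$, respectively, while the observation covariance is the full matrix~$\bm K+s_{\mathrm{on}}^2\bm I_N$. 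The standard conditioning formula for jointly Gaussian variables therefore gives that~$f(\bm x)\mid\mathbb{D}$ is Gaussian with mean~\eqref{eq:muf} and variance~\eqref{eq:sigf}, and analogously for~$g(\bm x)$ with \eqref{eq:mug}--\eqref{eq:sigg}.

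Next I would prove a pointwise concentration bound: for any fixed~$\bm x$, the Gaussian tail gives~$P(|f(\bm x)-\mu_f(\bm x)|\le \sqrt{\beta}\,\sigma_f(\bm x)) \ge 1-2e^{-\beta/2}$. Choosing a~$\tau$-net~$\mathbb{X}_\tau\subset\mathbb{X}$ of cardinality at most~$(1+r_0/\tau)^{d_x}$ and applying the union bound with~$2e^{-\beta/2}\,|\mathbb{X}_\tau|=\delta$ produces the expression~$\beta(\tau)=2d_x\log(1+r_0/\tau)-2\log(\delta)$ and gives the bound uniformly over the net. The same argument applied to the separate posterior for~$g$ yields the analogous statement.

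Finally, to pass from the net to all of~$\mathbb{X}$, I would use Lipschitz continuity. For any~$\bm x\in\mathbb{X}$, pick the closest net point~$\bm x'$, so that~$\|\bm x-\bm x'\|\le\tau$. The triangle inequality together with the Lipschitz constants~$L_f$ and~$L_{\mu_f}$ yields~$|f(\bm x)-\mu_f(\bm x)|\le|f(\bm x')-\mu_f(\bm x')|+(L_f+L_{\mu_f})\tau$, while Lipschitz continuity of~$\sigma_f^2$ and the subadditivity of the square root give~$\sigma_f(\bm x')\le\sigma_f(\bm x)+\sqrt{L_{\sigma_f^2}\tau}$. Combining these inequalities reproduces exactly the additive term~$\gamma_f(\tau)$; the identical argument with the Lipschitz constants of~$g$,~$\mu_g$ and~$\sigma_g^2$ gives~$\gamma_g(\tau)$. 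The main obstacle is the first step, namely justifying that the marginal posteriors for the individual components~$f$ and~$g$ coincide with those obtained by conditioning each independent prior on the joint observations—this relies crucially on the product structure of~\eqref{eq:compkern} together with the independence in \cref{ass:gp}, without which no separate bounds on~$f$ and~$g$ could be extracted from measurements of their linear combination.
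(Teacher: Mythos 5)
Your proof is correct and takes essentially the same route as the paper, which cites Duvenaud et al.\ for the posterior decomposition induced by the additive kernel structure and then adapts the uniform error bound of Lederer et al.\ (a $\tau$-covering of $\mathbb{X}$ combined with Lipschitz continuity of the unknown function, the posterior mean and the posterior variance); you merely make both cited ingredients explicit, which is if anything cleaner for extracting the posterior of $g(\cdot)$ than the paper's ``divide $g(\bm{x})u$ by $u$'' step. One small quantitative remark: to obtain exactly $\beta(\tau)=2d_x\log(1+r_0/\tau)-2\log(\delta)$ you need the Gaussian tail bound $P(|Z|>c)\leq e^{-c^2/2}$ (equivalently $1-\Phi(c)\leq\tfrac{1}{2}e^{-c^2/2}$ for $c\geq 0$), whereas the factor $2e^{-\beta/2}$ you use would leave an extra additive $2\log 2$ in $\beta(\tau)$.
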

\begin{proof}
	It has been shown in \cite{Duvenaud2014} that the independent components~$f(\cdot)$
	and~$g(\cdot)u$ can be inferred by
	\begin{align}
		f(\bm{x})|\mathbb{D}&\sim\mathcal{N}(\mu_f(\bm{x}),\sigma_f^2(\bm{x}))\\
		g(\bm{x})u|\mathbb{D}&\sim\mathcal{N}(\mu_g(\bm{x})u,u^2\sigma_g^2(\bm{x}))
	\end{align}
	due to the structure of the kernel. Since $g(\bm{x})u$ depends linearly on $u$, we can
	extract
	\begin{align}
		g(\bm{x})|\mathbb{D}&\sim\mathcal{N}(\mu_g(\bm{x}),\sigma_g^2(\bm{x})).
	\end{align}
	Based on these identities, it is straightforward to adapt \cite[Theorem 3.1]{Lederer2019}
	to obtain the uniform error bounds \eqref{eq:fbound},~\eqref{eq:gbound}.\looseness=-1
\end{proof}

It is well known from scattered data approximation \cite{Wendland2005} that training data which covers $\mathbb{X}$ well
typically leads to small posterior variances, thereby implying that the learned 
model has a high accuracy. \cref{th:gpEbound} also exhibits this behavior, even 
though it additionally depends on the constants~$\gamma_f(\tau)$ and~$\gamma_g(\tau)$. Since these 
constants can be made arbitrarily small by reducing the value of $\tau$, their effect is usually negligible. 
In fact, bounds \eqref{eq:fbound} and \eqref{eq:gbound} can be shown to converge to $0$ under 
weak assumptions~\cite{Lederer2019}.\looseness=-1

\setlength{\abovedisplayskip}{5.2pt}
\setlength{\belowdisplayskip}{5.2pt}

\section{QUALITY ASSESSMENT OF TRAINING DATA FOR LEARNING-BASED CONTROL}
\label{sec:Control}

\subsection{Lyapunov-based Quality Assessment}
\label{subsec: data quality}

Although GPs are frequently used in control design, 
the relationship between training data and the resulting performance of a control law $u=\pi(\bm{x})$
has barely been analyzed. Therefore, there is typically little
insight on where training samples should be placed to achieve the highest improvement in
control performance. In the sequel, we measure the control performance using a Lyapunov 
function $V:\mathbb{R}^{d_x}\rightarrow\mathbb{R}_{+,0}$.
%, such that the stationary tracking error
%can be estimated using the induced ultimately bounded set. 
Therefore, we investigate
the time derivative of the Lyapunov function for systems defined in \eqref{eq:sys} given by\looseness=-1
\begin{align}
	&\!\dot{V}(\bm{e})=\nonumber\\
	&\!\sum\limits_{i=1}^{d_x-1}\!\frac{\partial V(\bm{e})}{\partial e_i}e_{i+1}
	\!\!+\!\frac{\partial V(\bm{e})}{\partial e_{d_x}}\!\left(\!\!f(\bm{x})\!+\!g(\bm{x})\pi(\bm{x})\!-\!\frac{\mathrm{d}^{d_x}x_{d}}{\mathrm{d}t^{d_x}}
	\!\!\right)\!.\!
\end{align}
By employing a Gaussian process model, as presented in
\cref{sec:GPR}, and exploiting 
the uniform error bound for GPs from~\cref{th:gpEbound}, 
we can bound this derivative by\looseness=-1
\begin{align}
	\dot{V}(\bm{e})&\leq \dot{V}_{\mathrm{nom}}(\bm{e})+\dot{V}_{\sigma_f}(\bm{e})+\dot{V}_{\sigma_g}(\bm{e}),
\end{align}
where the nominal component of the Lyapunov derivative is computed based on the GP mean function as
\begin{align}
	&\!\dot{V}_{\mathrm{nom}}(\bm{e})= \nonumber\\
	&\!\sum\limits_{i=1}^{d_x\!-\!1}\!\!\frac{\partial V(\bm{e})}{\partial e_i}e_{i+1}
	\!\!+\!\frac{\partial V(\bm{e})}{\partial e_{d_x}}\!\!\left(\!\!\mu_f\!(\bm{x})\!+\!\mu_g\!(\bm{x})\pi(\bm{x})\!-\!\frac{\mathrm{d}^{d_x}x_{d}}{\mathrm{d}t^{d_x}}\!\!\right)\!\!.\!
\end{align}
The uncertain component of $\dot{V}(\cdot)$ is separated 
into components for the uncertainty about $f(\cdot)$ and 
$g(\cdot)$, given by
\begin{align}
\label{eq:dV_sf}
	\dot{V}_{\sigma_f}(\bm{e})&=\left|\frac{\partial V(\bm{e})}{\partial e_{d_x}}\right|(\sqrt{\beta(\tau)}\sigma_f(\bm{x})+\gamma_f(\tau))\\
	\dot{V}_{\sigma_g}(\bm{e})&=\left|\frac{\partial V(\bm{e})}{\partial e_{d_x}}\right|(\sqrt{\beta(\tau)}\sigma_g(\bm{x})+\gamma_g(\tau))|\pi(\bm{x})|.
	\label{eq:dV_sg}
\end{align}
The nominal component of the Lyapunov function derivative does not depend on the 
uncertainty. Hence, it does not provide insight into the relationship between 
training data and control performance. In contrast, 
$\dot{V}_{\sigma_f}(\cdot)$ and $\dot{V}_{\sigma_g}(\cdot)$ directly depend on the 
GP posterior standard deviations, and thereby on the training data density. In order to measure this 
density in a flexible way, we introduce the~$M$-fill distance, inspired by classical concepts 
from scattered data approximation \cite{Wendland2005}.\looseness=-1

\begin{definition}
	The $M$-fill distance $\phi_{\bar{u}\!,\ubar{u}\!,M}(\bm{x})$ at a point $\bm{x}\in\mathbb{X}$ 
	is defined as the minimum radius~$\varphi$ of a ball with center $\bm{x}$, such that the ball
	contains~$M$ training samples~$\bm{z}^{(\!n\!)}$
	with control inputs~$\ubar{u}\!\leq\! |u^{(\!n\!)}|\!\leq\! \bar{u}$ for 
	some~$\ubar{u},\bar{u}\in\mathbb{R}_{+,0}$, i.e.,\looseness=-1
	\begin{subequations}
		\begin{align}
		\!\phi_{\bar{u}\!,\ubar{u}\!,M}(\bm{x})\!&=\!\min\limits_{\varphi\in\mathbb{R}_{+,0}} \varphi\\
		\!\text{s.t. }\!  &\!\left|\! \left\{\! \bm{z}\!^{(\!n\!)}\!\!\in\!\mathbb{D}\!\!:\! \|\bm{x}\!-\!\bm{x}\!^{(\!n\!)}\!\|\!\leq\! \varphi 
		\!\wedge\! \ubar{u}\!\leq\! |u\!^{(\!n\!)}\!|\!\leq\! \bar{u} \!\right\}\! \right|\!\geq\!M\!.
		\end{align}
	\end{subequations}\tiny{~}
\end{definition}
The~$M$-fill distance~$\phi_{\bar{u},\ubar{u},M}(\bm{x})$ measures the distance to 
the~$M$ closest training samples, where the parameter $M$ is used to 
adapt~$\phi_{\bar{u},\ubar{u},M}(\bm{x})$  to the total number of training samples $N$. Intuitively, 
one should choose $M\ll N$, such that only training 
points in the proximity of $\bm{x}$ are relevant for~$\phi_{\bar{u},\ubar{u},M}(\bm{x})$, thereby making it 
a local measure of the data density. A low~$M$-fill 
distance implies a high data density and indeed, 
upper bounds for~$\phi_{\bar{u},\ubar{u},M}(\bm{x})$ can be derived to 
guarantee a desired behavior $\xi_f(\cdot)$ and $\xi_g(\cdot)$ for~$\dot{V}_{\sigma_f}(\cdot)$ and~$\dot{V}_{\sigma_g}(\cdot)$,
respectively.\looseness=-1
\begin{theorem}
	\label{cor:datf}
	If the~$M$-fill
	distance~$\phi_{\bar{u}_f,0,M}\!(\cdot)$ satisfies\looseness=-1
	\begin{align}
	\phi_{\bar{u}_f,0,M}^2(\bm{x})\leq \bar{\phi}_{f}^2(\bm{x})+\theta_f
	\end{align}	
	for all~$\bm{x}\in\mathbb{X}$, where
	\begin{align}
	\label{eq:phif}
	\bar{\phi}_{f}^2(\bm{x})\!&=
	-l_f^2\!\log\! \left(\!\! 1\!-\!
	 \frac{\left(\xi_f(\bm{e})-\gamma_f(\tau)\left|\frac{\partial V(\bm{e})}{\partial e_{d_x}}\right|\right)^2}{\beta(\tau)s_f^2\left|\frac{\partial V(\bm{e})}{\partial e_{d_x}}\right|^2}
	 \!\right)\\
	\theta_f&=-l_f^2\log\left(\!\!\frac{Ms_{f}^2\!+\!Ms_{g}^2\bar{u}_f^2\!+\!s_{\mathrm{on}}^2}{Ms_{f}^2} \!\!\right)
	\label{eq:thetaf}
	\end{align}
	for any~$\bar{u}_f\!\in\!\mathbb{R}_{+,0}$,
	$M\!\in\!\mathbb{N}$, and~$\xi_f(\bm{e})\!>\!\gamma_f(\tau)|\partial V(\bm{e})/\partial e_{d_x}|$, %~$\forall \bm{x}\!\in\!\mathbb{X}$, 
	
	\scalebox{.001}{$~$}\vspace*{-0.14cm}\\
	\noindent
	then,~$\dot{V}_{\sigma_f}(\bm{e})\!\leq\!\xi_f(\bm{e})$,~$\forall \bm{x}\!\in\!\mathbb{X}$.\looseness=-1
\end{theorem}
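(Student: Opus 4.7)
The plan is to reduce the target inequality $\dot{V}_{\sigma_f}(\bm{e}) \leq \xi_f(\bm{e})$ to a pointwise upper bound on the GP posterior variance $\sigma_f^2(\bm{x})$, and then to establish that variance bound from the assumed condition on the $M$-fill distance.

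First I would solve the definition of $\dot{V}_{\sigma_f}(\bm{e})$ in~\eqref{eq:dV_sf} for $\sigma_f(\bm{x})$. Under the standing assumption $\xi_f(\bm{e}) > \gamma_f(\tau)|\partial V(\bm{e})/\partial e_{d_x}|$, which also makes the case $\partial V/\partial e_{d_x} = 0$ trivial, the inequality $\dot{V}_{\sigma_f}(\bm{e}) \leq \xi_f(\bm{e})$ rearranges to a quadratic bound on $\sigma_f^2(\bm{x})$ which, on comparison with~\eqref{eq:phif}, takes the compact form
\begin{equation*}
\sigma_f^2(\bm{x}) \leq s_f^2\bigl(1 - \exp\bigl(-\bar{\phi}_f^2(\bm{x})/l_f^2\bigr)\bigr).
\end{equation*}
The theorem therefore reduces to establishing this inequality under the hypothesis on $\phi_{\bar{u}_f,0,M}$.

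To do so, I would exploit two facts. First, the GP posterior variance is monotone in the training set, so removing observations only increases variance; this follows from the law of total variance applied to the Gaussian identities invoked in the proof of \cref{th:gpEbound}. Let $S$ denote any set of $M$ indices with $\|\bm{x}-\bm{x}^{(n)}\|\leq \phi_{\bar{u}_f,0,M}(\bm{x})$ and $|u^{(n)}|\leq \bar{u}_f$, whose existence is guaranteed by the definition of the $M$-fill distance. Restricting the regression to $S$ yields
\begin{equation*}
\sigma_f^2(\bm{x}) \leq s_f^2 - \bm{k}_{f,S}^T(\bm{x})(\bm{K}_S + s_{\mathrm{on}}^2\bm{I}_M)^{-1}\bm{k}_{f,S}(\bm{x}).
\end{equation*}
Second, with the squared exponential kernel, each entry $(\bm{K}_S)_{ij}$ is at most $s_f^2 + \bar{u}_f^2 s_g^2$, so that $\bm{K}_S + s_{\mathrm{on}}^2\bm{I}_M \preceq (Ms_f^2 + M\bar{u}_f^2 s_g^2 + s_{\mathrm{on}}^2)\bm{I}_M$, and each entry of $\bm{k}_{f,S}(\bm{x})$ is at least $s_f^2\exp(-\phi_{\bar{u}_f,0,M}^2(\bm{x})/(2l_f^2))$ because the relevant training inputs all lie inside the ball, giving $\|\bm{k}_{f,S}(\bm{x})\|^2 \geq M s_f^4\exp(-\phi_{\bar{u}_f,0,M}^2(\bm{x})/l_f^2)$.

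Combining these bounds produces an explicit upper bound on $\sigma_f^2(\bm{x})$ in terms of $\phi_{\bar{u}_f,0,M}(\bm{x})$; after taking logarithms and collecting terms, the hypothesis $\phi_{\bar{u}_f,0,M}^2(\bm{x}) \leq \bar{\phi}_f^2(\bm{x}) + \theta_f$ with $\theta_f$ as in~\eqref{eq:thetaf} is exactly what absorbs the $M$, $\bar{u}_f$, and $s_{\mathrm{on}}$-dependent correction, delivering the reduced-form bound on $\sigma_f^2(\bm{x})$ and hence the claim. The main obstacle is the spectral bound on $\bm{K}_S$: using the crude all-entries estimate discards correlation information inside the ball but is precisely what generates the $M$-factor appearing in $\theta_f$; a sharper spectral bound would tighten the data-density requirement at the cost of breaking the clean closed form of $\theta_f$.
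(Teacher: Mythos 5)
Your proposal is correct and follows essentially the same route as the paper's proof: restrict the posterior variance to the $M$ nearby samples with $|u^{(n)}|\leq\bar{u}_f$, bound the reduced kernel matrix spectrally via the entrywise (Gershgorin-type) estimate $\lambda_{\max}(\bm{K}_M)\leq M(s_f^2+\bar{u}_f^2 s_g^2)$, lower-bound $\|\bm{k}_{f,M}(\bm{x})\|^2$ by $Ms_f^4\exp(-\phi_{\bar{u}_f,0,M}^2(\bm{x})/l_f^2)$, and solve for the fill distance, which reproduces exactly the roles of $\bar{\phi}_f^2(\bm{x})$ and $\theta_f$. The only difference is cosmetic: you rearrange $\dot{V}_{\sigma_f}\leq\xi_f$ into a target variance bound first and cite variance monotonicity explicitly, whereas the paper substitutes its variance bound into \eqref{eq:dV_sf} at the end and cites the reference for the subset-of-data step.
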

\begin{proof}
	In order to prove this lemma, we have to bound the posterior standard deviation~$\sigma_f(\bm{x})$.
	Following the  approach introduced in \cite{Lederer2019a}, this is achieved by considering only 
	$M$ training samples~$\bm{z}^{(n)}$ within distance~$\phi_{\bar{u},\ubar{u},M}(\bm{x})$ to~$\bm{x}$ 
	such that the posterior variance is bounded by\footnote{
		We do not state the dependency on~$\bm{x}$ explicitly if it arises from the restriction 
		of the considered training samples for notational simplicity.\looseness=-1}
	\begin{align}
	\sigma_f^2(\bm{x})\leq s_{f}^2-\frac{\|\bm{k}_{f,M}(\bm{x})\|^2}{\lambda_{\max}(\bm{K}_M)+s_{\mathrm{on}}^2},
	\end{align}
	where~$\bm{k}_{f,M}(\bm{x})$ and~$\bm{K}_M$ denote the covariance vector and matrix based on these~$M$ 
	samples and~$\lambda_{\max}(\bm{K}_M)$ denotes the maximum eigenvalue. 
	Application of the Gershgorin theorem allows us to bound the maximum 
	eigenvalue by\looseness=-1
	\begin{align}
	\lambda_{\max}(\bm{K}_M)\leq M(s_{f}^2+\bar{u}_f^2s_{g}^2).
	\end{align}
	Since we have~$\|\bm{k}_{f\!,M}\!(\bm{x})\|^2\!\geq\!\! M s_{f}^4\!\exp(\!-\phi_{\bar{u}_f\!,0,M}^2\!(\bm{x})/l_f^2)$,
	we obtain the posterior variance bound
	\begin{align}
	\sigma_f^2(\bm{x})&\leq  s_{f}^2-\frac{s_{f}^4\exp\left(-\frac{\phi_{\bar{u}_f\!,0,M}^2(\bm{x}) }{l_f^2}\right)}{s_{f}^2+\bar{u}_f^2s_{g}^2+\frac{s_{\mathrm{on}}^2}{M}}.
	\end{align}
	Substituting this expression into \eqref{eq:dV_sf} and solving for~$\phi_{\bar{u}_f\!,0,M}^2(\bm{x})$ yields the desired result.	
\end{proof}
\begin{corollary}
	\label{th:datg}
	If the $M$-fill distance~$\phi_{\bar{u}_g,\ubar{u}_g,M}(\cdot)$ satisfies
	\begin{align}
	\phi^2_{\bar{u}_g,\ubar{u}_g,M}(\bm{x})\leq \bar{\phi}_{g}^2(\bm{x})+\theta_g,
	\end{align}
	for all~$\bm{x}\in\mathbb{X}$, where
	\begin{align}
	\label{eq:phig}
	\!\bar{\phi}_g^2(\bm{x})\!&=-l_g^2\log\!\!\left(\!\! 1\!-\! 
	\frac{\!\left(\!\xi_g(\bm{e})\!-\!\gamma_g(\tau)|\pi(\bm{x})|\!\left|\!\frac{\partial V(\bm{e})}{\partial e_{d_x}}\!\right|\right)^{\!\!2}\!}{\beta(\tau)s_g^2|\pi(\bm{x})|^2\!\left|\!\frac{\partial V(\bm{e})}{\partial e_{d_x}}\!\right|^2}
	\right)\!\\
	\theta_g&=-l_g^2\log\left(\!\frac{Ms_{f}^2\!+\!M\bar{u}_g^2s_{g}^2\!+\!s_{\mathrm{on}}^2}{Ms_{g}^2\ubar{u}_g^2} \!\right)\!
	\label{eq:thetag}
	\end{align}
	for any~$\ubar{u}_g,\bar{u}_g\!\!\in\!\!\mathbb{R}_{+,0}$,~$\ubar{u}\!\!\leq\!\!\bar{u}$,
	$M\!\!\in\!\!\mathbb{N}$, and~$\xi_g(\bm{e})\!\!>\!\!\gamma_g(\tau)|\pi(\bm{x})| |\partial V(\bm{e})/\partial e_{d_x}|$, 
	then,~$\dot{V}_{\sigma_g}(\bm{e})\!\!\leq\!\xi_g(\bm{e})$,~$\forall \bm{x}\!\in\!\mathbb{X}$.\looseness=-1
\end{corollary}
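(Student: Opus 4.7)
The plan is to mirror the proof of \cref{cor:datf} while accounting for the input-weighted structure that appears in the posterior variance $\sigma_g^2(\cdot)$ from \eqref{eq:sigg}. The crucial structural difference is that $\sigma_g^2(\bm{x})$ involves the matrix $\bm{U}$, so training inputs with larger absolute control values contribute more strongly to reducing the variance. This motivates tracking both an upper bound $\bar{u}_g$ on $|u^{(n)}|$ — needed to control $\lambda_{\max}(\bm{K}_M)$ via Gershgorin — and a lower bound $\ubar{u}_g$ — needed to obtain a nontrivial lower bound on the numerator $\|\bm{U}_M\bm{k}_{g,M}(\bm{x})\|^2$ — in the definition of the $M$-fill distance.

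Following the strategy of \cref{cor:datf}, I would first restrict the posterior computation to the $M$ training points lying within radius $\phi_{\bar{u}_g,\ubar{u}_g,M}(\bm{x})$ of $\bm{x}$ with inputs satisfying $\ubar{u}_g\leq |u^{(n)}|\leq \bar{u}_g$; these points exist by the hypothesis on the $M$-fill distance. Let $\bm{K}_M$, $\bm{k}_{g,M}(\bm{x})$ and $\bm{U}_M$ denote the corresponding restrictions. By the standard monotonicity of the GP posterior variance in the training set, this yields the bound
\begin{align*}
\sigma_g^2(\bm{x})\leq s_g^2-\frac{\|\bm{U}_M\bm{k}_{g,M}(\bm{x})\|^2}{\lambda_{\max}(\bm{K}_M)+s_{\mathrm{on}}^2}.
\end{align*}

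Next, I would bound the denominator by applying the Gershgorin theorem exactly as in \cref{cor:datf}, combined with $|u^{(n)}|\leq \bar{u}_g$, to obtain $\lambda_{\max}(\bm{K}_M)\leq M(s_f^2+\bar{u}_g^2 s_g^2)$. For the numerator, the lower bound $|u^{(n)}|\geq\ubar{u}_g$ together with the squared-exponential lower bound on $k_g(\bm{x}^{(n)},\bm{x})$ for points within the $M$-fill radius gives $\|\bm{U}_M\bm{k}_{g,M}(\bm{x})\|^2\geq M\ubar{u}_g^2 s_g^4\exp(-\phi_{\bar{u}_g,\ubar{u}_g,M}^2(\bm{x})/l_g^2)$.

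Substituting both estimates into \eqref{eq:dV_sg} and imposing $\dot{V}_{\sigma_g}(\bm{e})\leq\xi_g(\bm{e})$ produces a single scalar inequality in $\phi_{\bar{u}_g,\ubar{u}_g,M}^2(\bm{x})$. Solving it — where the assumption $\xi_g(\bm{e})>\gamma_g(\tau)|\pi(\bm{x})||\partial V(\bm{e})/\partial e_{d_x}|$ ensures the argument of the resulting logarithm lies in $(0,1)$ — yields exactly $\bar{\phi}_g^2(\bm{x})+\theta_g$. The main bookkeeping concern, and the only nontrivial departure from the proof of \cref{cor:datf}, is tracking the factor $\ubar{u}_g^2$ arising from the input-weighted numerator, which is ultimately responsible for the $\ubar{u}_g^2$ appearing in the denominator of $\theta_g$ in \eqref{eq:thetag}; all remaining algebraic manipulations are symmetric to those already executed for the $f$-component.
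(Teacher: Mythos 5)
Your proposal is correct and follows essentially the same route as the paper, which simply bounds $\sigma_g^2(\bm{x})$ by restricting to the $M$ nearby samples with $\ubar{u}_g\leq|u^{(n)}|\leq\bar{u}_g$ via $\sigma_g^2(\bm{x})\leq s_g^2-\|\bm{U}_M\bm{k}_{g,M}(\bm{x})\|^2/(\lambda_{\max}(\bm{K}_M)+s_{\mathrm{on}}^2)$ and then adapts the proof of \cref{cor:datf}. In fact you spell out the two steps the paper leaves implicit — the Gershgorin bound $\lambda_{\max}(\bm{K}_M)\leq M(s_f^2+\bar{u}_g^2s_g^2)$ and the input-weighted numerator bound $\|\bm{U}_M\bm{k}_{g,M}(\bm{x})\|^2\geq M\ubar{u}_g^2s_g^4\exp(-\phi^2_{\bar{u}_g,\ubar{u}_g,M}(\bm{x})/l_g^2)$, which is exactly where the $\ubar{u}_g^2$ in \eqref{eq:thetag} originates — so the argument is complete and consistent with the paper's.
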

\begin{proof}
	We can bound the posterior standard deviation~$\sigma_f(\bm{x})$ analogously to 
	\cref{cor:datf} since 
	\begin{align}
	\sigma_g^2(\bm{x})\leq s_{g}^2-\frac{\|\bm{k}_{g,M}(\bm{x})\bm{U}\|^2}{\lambda_{\max}(\bm{K}_M)+s_{\mathrm{on}}^2}.
	\end{align}
	The remainder of this proof follows directly from a straightforward
	adaptation of the proof of \cref{cor:datf}.
\end{proof}

\cref{cor:datf} and \cref{th:datg} allow to 
directly investigate if~$\dot{V}_f(\cdot)$ and~$\dot{V}_g(\cdot)$ 
satisfy a desired behavior by measuring the $M$-fill distance.
Therefore, they 
provide helpful insight on how the training data should be distributed. 
The quality of the data for learning the decoupling between $f(\cdot)$ and $g(\cdot)$ 
is measured by $\theta_f$ and $\theta_g$.
It is straightforward to see that
$\theta_g$ is close to zero if~$\bar{u}_g\approx\ubar{u}_g$ is large.  This is 
intuitive since the weight of~$g(\cdot)$ 
in the function~$h(\bm{z})=f(\bm{x})+g(\bm{x})u$ grows linearly with~$u$. Thus, large 
control inputs are beneficial for the identification of~$g(\cdot)$. In contrast, \cref{cor:datf} shows that small 
control inputs~$u\approx 0$ are advantageous for 
learning~$f(\cdot)$ since the control dependency of
$h(\cdot)$ disappears. 
Finally, large noise variance
requires higher data density to keep the term~$s_{\mathrm{on}}^2/M$ small in both bounds.\looseness=-1

In contrast to $\theta_f$ and $\theta_g$, the 
functions~$\bar{\phi}_{f}^2(\bm{x})$ 
and~$\bar{\phi}_{g}^2(\bm{x})$ express the dependency on the 
performance specification. Since $\gamma_f(\tau)$ and $\gamma_g(\tau)$ are 
usually negligible, the quotients in~\eqref{eq:phif} and \eqref{eq:phig} can be approximated by
the ratio between the squared performance specifications~$\xi_f^2(\cdot)$, $\xi_g^2(\cdot)$ and 
the prior bounds $\beta(\tau)s_f^2|\partial
V(\bm{e})/\partial e_{d_x}|^2$, $\beta(\tau)s_g^2|\pi(\bm{x})|^2|\partial
V(\bm{e})/\partial e_{d_x}|^2$, respectively. Hence, small performance specifications,  e.g., 
$\xi_f(\cdot)\ll \beta(\tau)s_f^2|\partial V(\bm{e})/\partial e_{d_x}|^2$, cause small values 
of~$\bar{\phi}_f(\cdot)$, $\bar{\phi}_g(\cdot)$, which in turn indicate the necessity of high data density.
Since the performance specifications define the allowed increase of $\dot{V}_{\sigma_f}(\cdot)$, $\dot{V}_{\sigma_g}(\cdot)$, 
it is natural to define them as a fraction of the nominal 
derivative, e.g., \looseness=-1
\begin{align}
\label{eq:Vnom frac}
\xi_f\!(\cdot)\!=\!\chi_f\dot{V}_{\mathrm{nom}}\!(\cdot),~~ \xi_g\!(\cdot)\!=\!\chi_g\dot{V}_{\mathrm{nom}}\!(\cdot),~~ \chi_f\!,\!\chi_g\!\in\! \mathbb{R}_+,
\end{align}
such that stability is guaranteed for $\chi_f\!+\!\chi_g\!<\!1$. Due to this intuitive
interpretation of the~$M$-fill distance, we propose to use it as the basis for measures
of the quality of the training data distribution.\looseness=-1
\def\xstrutf{\rule{0pt}{1.75ex}}
\def\xstrutg{\rule{0pt}{1.6ex}}
\begin{definition}
The $\rho\!_{\xstrutf f}$- and $\rho\!_{\xstrutg g}$-gaps are defined as
\begin{align}
\rho\!_{\xstrutf f}(\bm{x})&=\phi_{\bar{u}_f,0,M}(\bm{x})-\bar{\phi}_f^2(\bm{x})\\
\rho\!_{\xstrutg g}(\bm{x})&=\phi_{\bar{u}_g,\ubar{u}_g,M}(\bm{x})-\bar{\phi}_g^2(\bm{x}).
\end{align}
\end{definition}

The $\rho$-gaps measure the difference 
between required $M$-fill distances~$\bar{\phi}_f^2(\cdot)$, $\bar{\phi}_g^2(\cdot)$, which express the 
dependency of the data density on the desired bounds~$\xi_f(\cdot)$, $\xi_g(\cdot)$ for~$\dot{V}_{\sigma_f}(\cdot)$, $\dot{V}_{\sigma_g}(\cdot)$, 
and the actual $M$-fill distances~$\phi_{\bar{u}_f,0,M}(\cdot)$, $\phi_{\bar{u}_g,\ubar{u}_g,M}(\cdot)$, which are independent of the control problem.
Therefore, searching for points~$\bm{x}\in\mathbb{X}$
with high $\rho\!_{\xstrutf f}$- and $\rho\!_{\xstrutg g}$-gaps yields regions, where the distance between 
training samples is too large to satisfy the bounds $\xi_f(\cdot)$, $\xi_g(\cdot)$. This can be exploited, e.g., in 
active learning and training data selection, to find points with high $\rho$-gap and add them to the training
set in order to reduce the violation 
of the performance specification. Note that $\theta_f$ and $\theta_g$ are not included in the definitions 
of $\rho\!_{\xstrutf f}(\cdot)$ and $\rho\!_{\xstrutg g}(\cdot)$, respectively, since they are independent of the state, and thereby do 
not provide useful information about the distribution of the data.\looseness=-1

\subsection{Ultimate Error Bound for Feedback Linearization}
\label{subsec:ult bound}

In order to demonstrate the intuitive relationship between the tracking error 
and the proposed $\rho$-gaps, we extend existing stability results for feedback linearization 
with GP models. According to \cite{Umlauft2020}, 
we define the filtered state 
$r=[\bm{\lambda}^T\quad 1]\bm{e}$ with Hurwitz coefficients 
$\bm{\lambda}\in\mathbb{R}^{d_x-1}$. Based on the GP model from \cref{sec:GPR}, 
we define the approximately linearizing control law\looseness=-1
\begin{align}
\label{eq:FeLi}
\pi(\bm{x})=\mu_g^{-1}(\bm{x})(\nu-\mu_f(\bm{x})),
\end{align}
where~$\nu$ is the input to the linearized system, and~$\mu_f(\cdot)$ and~$\mu_g(\cdot)$ are
defined in \eqref{eq:muf} and \eqref{eq:mug}, 
respectively. The existence of the reciprocal value of~$\mu_g(\cdot)$ can be ensured 
by a suitable choice of hyperparameters and prior mean~$\hat{g}(\cdot)$ as shown 
in \cite[Proposition 1]{Umlauft2020}. In order to achieve perfect tracking behavior
for known functions~$f(\cdot)$ and~$g(\cdot)$, we define the input to the linearized 
system as\looseness=-1
\begin{align}
\label{eq:nu}
\nu = -k_c r-\bm{\lambda}^T\bm{e}_{2:d_x}+\frac{\mathrm{d}^{d_x}x_d}{\mathrm{d}t^{d_x}},
\end{align}
where~$k_c\in\mathbb{R}_+$ denotes the control gain and~$\bm{e}_{2:d_x}=[e_2\ \cdots\ e_{d_x}]^T$. Following the 
approach in \cref{subsec: data quality}, 
we can determine the ultimately bounded set for the 
system in~\eqref{eq:sys} controlled by \eqref{eq:FeLi} as shown in the following theorem.\looseness=-1

\begin{theorem}
	\label{th:stab}
	Consider a system \eqref{eq:sys}, a prior model~$\hat{f}(\cdot)$, $\hat{g}(\cdot)$ 
	and training data~$\mathbb{D}$ satisfying Assumptions~\ref{ass:data}-\ref{ass:gp}.
	If\looseness=-1
	\begin{align}
	\label{eq:alpha}
		\alpha(\bm{x})=\frac{\sqrt{\beta(\tau)}\sigma_g(\bm{x})+\gamma_g(\tau)}{\mu_g(\bm{x})}<\eta \quad\forall \bm{x}\in\mathbb{X}
	\end{align}
	with
	\begin{align}
		\!\eta \!=\!\min\!\left\{\! \frac{k_c\lambda_2}{k_c\lambda_2\!+\!\lambda_1}\!,\ldots,\frac{k_c\lambda_{d_x\!-\!1}}{k_c\lambda_{d_x\!-\!1}\!+\!\lambda_{d_x\!-\!2}}\!,\frac{k_c}{k_c\!+\!\lambda_{d_x\!-\!1}} \!\right\}\!,\!
	\end{align}
	then there exists a control gain~$k_c$, such that the tracking error $\bm{e}$ obtained with the 
	feedback linearizing controller~\eqref{eq:FeLi} with input to the linearized system \eqref{eq:nu}
	converges, with probability of at least~$1-2\delta$, to the ultimately bounded set
	\begin{align}
	\label{eq:Bset imp}
	\!\mathbb{B}=\left\{\! \sqrt{\!\bm{e}^T\!\bm{\Lambda}\bm{e}}\leq\! \frac{\sqrt{\!\beta(\tau)}\sigma_{f}(\bm{x})\!+\!\gamma_{f}(\tau)
		\!+\!\alpha(\bm{x}) c(\bm{x}) }{\tilde{k}_c(\bm{x}) }\!\right\},
	\end{align}
	where
	\begin{align}
		\tilde{k}_c(\bm{x})=k_c \!\left( 1\!-\!\frac{\alpha(\bm{x})}{\eta } \right),&&
		c(\bm{x})=\left|\frac{\mathrm{d}^{d_x}x_{d}}{\mathrm{d}t^{d_x}}\!-\!\mu_{f}(\bm{x})\right|.
	\end{align}\tiny~
\end{theorem}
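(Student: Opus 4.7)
My approach is a Lyapunov analysis of the closed-loop error system. I would first substitute the controller~\eqref{eq:FeLi} with input~\eqref{eq:nu} into~\eqref{eq:edyn} and use the decompositions $f=\mu_f+(f-\mu_f)$ and $g/\mu_g=1+(g-\mu_g)/\mu_g$. This turns the last row of the error dynamics into
\begin{align*}
\dot{e}_{d_x} = -k_c r - \bm{\lambda}^T\bm{e}_{2:d_x} + (f-\mu_f)(\bm{x}) + \frac{g-\mu_g}{\mu_g}(\bm{x})\bigl(\nu - \mu_f(\bm{x})\bigr),
\end{align*}
so the filtered error obeys $\dot r = -k_c r + (f-\mu_f) + \frac{g-\mu_g}{\mu_g}(\nu-\mu_f)$, while $\bm{e}_{1:d_x-1}$ forms a Hurwitz companion system driven by $r$. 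From the definition of $\nu$ I also have the bound $|\nu - \mu_f(\bm{x})| \leq k_c|r| + |\bm{\lambda}^T\bm{e}_{2:d_x}| + c(\bm{x})$, which is the key identity for reinjecting the $g$-residual back into the error.

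I then invoke \cref{th:gpEbound} simultaneously for $f(\cdot)$ and $g(\cdot)$ and apply a union bound to obtain, with probability at least $1-2\delta$, the uniform estimates $|f(\bm{x})-\mu_f(\bm{x})|\leq\sqrt{\beta(\tau)}\sigma_f(\bm{x})+\gamma_f(\tau)$ and $|g(\bm{x})-\mu_g(\bm{x})|/|\mu_g(\bm{x})|\leq\alpha(\bm{x})$ over $\mathbb{X}$. This is where the $1-2\delta$ probability in the statement originates and under which the remaining deterministic analysis is carried out.

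For the Lyapunov function I would take $V(\bm{e})=\tfrac{1}{2}\bm{e}^T\bm{\Lambda}\bm{e}$, where $\bm{\Lambda}\succ 0$ is chosen so that the nominal closed loop (i.e.\ $f=\mu_f$, $g=\mu_g$) satisfies $\dot V_{\mathrm{nom}} = -k_c\,\bm{e}^T\bm{\Lambda}\bm{e}$. Such $\bm{\Lambda}$ is furnished by the standard Lyapunov equation for the nominal closed-loop matrix, which is the cascade of the $\bm{\lambda}$-Hurwitz companion block with the scalar contraction $\dot r=-k_c r$. Substituting the two GP bounds, the destabilizing contribution splits into a multiplicative term proportional to $\alpha(\bm{x})(k_c r^2+|r||\bm{\lambda}^T\bm{e}_{2:d_x}|)$ and an additive ``forcing'' $(\sqrt{\beta(\tau)}\sigma_f(\bm{x})+\gamma_f(\tau)+\alpha(\bm{x})c(\bm{x}))|r|$. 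The constant $\eta$ arises as the largest multiplicative uncertainty for which the cross term $\alpha(\bm{x})|r||\bm{\lambda}^T\bm{e}_{2:d_x}|$ can still be absorbed into $k_c\bm{e}^T\bm{\Lambda}\bm{e}$; its explicit componentwise minimum follows from a Schur-complement or Gershgorin-type balance of the diagonal contributions $k_c\lambda_i$ against the off-diagonal couplings $\lambda_{i-1}$, producing the effective decay $\tilde k_c(\bm{x})=k_c(1-\alpha(\bm{x})/\eta)$.

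The main obstacle is constructing $\bm{\Lambda}$ explicitly and proving the sharp matrix inequality that gives precisely the stated form of $\eta$, since a loose estimate would inflate the ultimate bound and not match~\eqref{eq:Bset imp}. Once this is in hand, collecting the terms yields
\begin{align*}
\dot V \leq -\tilde k_c(\bm{x})\,\bm{e}^T\bm{\Lambda}\bm{e} + \bigl(\sqrt{\beta(\tau)}\sigma_f(\bm{x})+\gamma_f(\tau)+\alpha(\bm{x})c(\bm{x})\bigr)\sqrt{\bm{e}^T\bm{\Lambda}\bm{e}},
\end{align*}
and a standard ``$\dot V<0$ outside a sublevel set'' argument produces the ultimate bound $\sqrt{\bm{e}^T\bm{\Lambda}\bm{e}}\leq(\sqrt{\beta(\tau)}\sigma_f(\bm{x})+\gamma_f(\tau)+\alpha(\bm{x})c(\bm{x}))/\tilde k_c(\bm{x})$ defining $\mathbb{B}$. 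The condition $\alpha(\bm{x})<\eta$ is exactly what guarantees $\tilde k_c(\bm{x})>0$; the existence of a sufficiently large $k_c$ then ensures that the set $\mathbb{B}$ is bounded and forward invariant. Since the whole derivation is conditional on the $1-2\delta$ event from \cref{th:gpEbound}, the convergence guarantee holds with the claimed probability.
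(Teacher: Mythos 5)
Your overall strategy (union bound over the two statements of \cref{th:gpEbound} to get $1-2\delta$, splitting the control signal into feedback and feedforward parts via $|\nu-\mu_f|\leq|k_cr+\bm{\lambda}^T\bm{e}_{2:d_x}|+c(\bm{x})$, and concluding with a ``$\dot V<0$ outside a sublevel set'' argument with $\alpha(\bm{x})<\eta$ ensuring $\tilde k_c(\bm{x})>0$) matches the paper. However, the central technical step is missing, and the route you propose to obtain it would fail. You postpone exactly the part that produces the theorem's constants --- the construction of $\bm{\Lambda}$ and the ``sharp matrix inequality'' yielding $\eta$ --- and the construction you sketch is infeasible: requiring a positive definite $\bm{\Lambda}$ with $\dot V_{\mathrm{nom}}=-k_c\,\bm{e}^T\bm{\Lambda}\bm{e}$ amounts to $A_{cl}^T\bm{\Lambda}+\bm{\Lambda}A_{cl}=-2k_c\bm{\Lambda}$, i.e.\ $\bm{\Lambda}^{1/2}A_{cl}\bm{\Lambda}^{-1/2}$ has symmetric part $-k_c\bm{I}$, which forces \emph{every} eigenvalue of the nominal closed-loop matrix to have real part exactly $-k_c$. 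But the nominal closed loop is a cascade whose $\bm{\lambda}$-companion block has eigenvalues fixed by the Hurwitz polynomial, independent of $k_c$, so no such full-rank $\bm{\Lambda}$ exists in general, and the subsequent ``Schur-complement or Gershgorin-type balance'' producing the specific $\eta$ is asserted rather than proved.

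The paper avoids this entirely: $\bm{\Lambda}$ is the rank-one matrix with $\bm{e}^T\bm{\Lambda}\bm{e}=r^2$, and the Lyapunov function is the scalar $V(r)=\tfrac12 r^2$ on the filtered state, for which the nominal decay $-k_cr^2$ is immediate and boundedness of $\bm{e}$ follows from $\bm{\lambda}$ being Hurwitz. The constant $\eta$ then enters through the scalar bound $|k_cr+\bm{\lambda}^T\bm{e}_{2:d_x}|\leq (k_c/\eta)|r|$ on the feedback component of $\mu_g\pi$, which combined with $\dot V_{\sigma_f}$, $\dot V_{\sigma_g}$ gives directly $\dot V\leq-\tilde k_c(\bm{x})r^2+|r|(\sqrt{\beta(\tau)}\sigma_f(\bm{x})+\gamma_f(\tau)+\alpha(\bm{x})c(\bm{x}))$. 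You also do not address why the uniform GP bounds, which hold only on the compact set $\mathbb{X}$, apply along the closed-loop trajectory; the paper handles this by first invoking the result of the reference [Umlauft et al.] that a sufficiently large $k_c$ ultimately bounds the state to $\mathbb{X}$, and only then restricting the Lyapunov analysis to that set. Without supplying the derivation of $\eta$ (or an equivalent bound on the feedback term in terms of $|r|$ alone) and this confinement argument, the proof is incomplete.
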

\begin{proof}
	It has been shown in \cite{Umlauft2020} that there exists a gain~$k_c$, such 
	that the closed-loop system is ultimately bounded to the compact set $\mathbb{X}$.
	Hence, we can restrict our analysis to this set. 
	Consider the Lyapunov function~$V(r)=\frac{1}{2}r^2$, which allows us
	to analyze the stability of the tracking error since~$\bm{\lambda}$ is Hurwitz 
	\cite{Yesildirek1995}. The derivative of this Lyapunov function is decomposed into
	\begin{align}
	\dot{V}(r)\leq-k_cr^2+\dot{V}_{\sigma_f}(r)+\dot{V}_{\sigma_g}(r),
	\end{align}
	where 
	\begin{align}
	\dot{V}_{\sigma_f}(r)&=|r|(\sqrt{\beta(\tau)}\sigma_f(\bm{x})+\gamma_f(\tau))\\
	\dot{V}_{\sigma_g}(r)&=|r|(\sqrt{\beta(\tau)}\sigma_g(\bm{x})+\gamma_g(\tau))|\pi(\bm{x})|.
	\end{align}
	 We can separate 
	the bound of the 
	control input into feedback and feedforward components
	\begin{align}
		|\mu_g(\bm{x})||\pi(\bm{x}))|\!\leq\!
		|k_c r+\bm{\lambda}^T\bm{e}_{2:d_x}|\!+\! \left|\frac{\mathrm{d}^{d_x}x_d}{\mathrm{d}t^{d_x}}\!-\!\mu_f(\bm{x})\right|.
	\end{align}
	The feedback component can be bounded %in terms of the filtered state 
	by
	\begin{align}
		|k_c r+\bm{\lambda}^T\bm{e}_{2:d_x}|\leq \frac{k_c}{\eta }|r|,
	\end{align}
	while the feedforward component is a bounded state dependent function \cite{Umlauft2020}. 
	Hence, we obtain the bound
	\begin{align}
	\!\dot{V}(r)\!\leq\! 
	&-\!\tilde{k}_c(\bm{x})r^2\!\!+\!|r|\!\bigg(\!\!\!\sqrt{\!\beta(\tau)}\sigma_{\!f}\!(\bm{x})\!+\!\gamma_{\!f}\!(\tau)\!+\!\alpha(\bm{x}) c(\bm{x}) \!\!\bigg)\!
	\end{align}
	due to the definition of~$\alpha(\bm{x})$ and $\tilde{k}_c(\bm{x})$. The quotient~$\alpha(\bm{x})/\eta$ is 
	smaller than one by assumption, such that $\tilde{k}_c(\bm{x})>0$ and the Lyapunov function derivative 
	becomes negative for all\linebreak
	$r>(\sqrt{\!\beta(\tau)}\sigma_{f}(\bm{x})\!+\!\gamma_{f}(\tau)
		\!+\!\alpha(\bm{x}) c(\bm{x}))/\tilde{k}_c(\bm{x})$. 
\end{proof}

While the condition on the sufficiently high control gain~$k_c$ is theoretically important to
ensure the global ultimate boundedness, it is practically sufficient to analyze the conditions of 
\cref{th:stab} on a set~$\mathbb{X}$ and choose~$k_c$ high enough to ensure that all points 
$\bm{x}_d+\bm{e}$ with~$\bm{e}\in\mathbb{B}$ are in the interior of~$\mathbb{X}$. Based on this
local analysis, an ultimate bound is obtained which holds for initial values~$\bm{x}(0)$ in a 
neighborhood of~$\bm{x}_d(0)$. Moreover, the condition on~$\alpha(\cdot)$ stems from the uncertainty
about~$g(\cdot)$ and ensures that its sign is robustly known under the posterior GP distribution. 
Due to this uncertainty, the effect of the feedback control on the tracking error bound is reduced,
resulting in a diminished effective control gain~$\tilde{k}_c(\cdot)$. Furthermore, the ultimate error 
bound can be made arbitrarily
small by increasing the effective gain~$\tilde{k}_c$, which is achieved by increasing the nominal 
gain~$k_c$ or reducing the uncertainty about~$g(\cdot)$. In fact, if the function~$g(\cdot)$ 
is known exactly, this effect disappears and we recover the ultimate bound which has been 
proposed in different forms in \cite{Umlauft2020, Lederer2019}.\looseness=-1

\section{NUMERICAL EVALUATION}
\label{sec:simulation}

\subsection{Experimental Setting}
\label{subsec:setting}
We investigate the learning-based controller \eqref{eq:FeLi} and the 
corresponding ultimate bound on the nonlinear system\looseness=-1
\begin{align}
	f(\bm{x})&=1-\sin(x_1)+\frac{1}{1+\exp(-x_2)}\\
	g(\bm{x})&=20\left(1+\frac{1}{2}\sin\left(\frac{x_2}{4}\right)\right).
\end{align}
This dynamical system is well-suited
for illustrating the proposed $\rho$-gaps. Both functions are slowly 
varying, such that the GP is capable of extrapolating to regions without data. Therefore, we 
do not need well distributed data and can investigate the effect of an increasing distance to training data
on the $\rho$-gaps. 
We express prior knowledge about this system using the approximate models
$\hat{f}(\bm{x})=0$,~$\hat{g}(\bm{x})=20$. Moreover, we define the reference trajectory
$x_d(t)=2\sin(t)$ and generate~$N=1000$ training samples by applying a high gain 
feedback linearizing controller based on the approximate models~$\hat{f}(\cdot)$, 
$\hat{g}(\cdot)$ to track the reference trajectory. We add zero mean Gaussian 
noise with standard deviation~$s_{\mathrm{on}}=0.5$ to the observed accelerations
and train a GP using log-likelihood maximization \cite{Rasmussen2006}. 
We approximate the Lipschitz constants~$L_{\mu}$ and~$L_{\sigma^2}$ of the 
resulting GP numerically and bound the Lipschitz constant of~$f(\cdot)$
and~$g(\cdot)$ by using twice the nominal value. 

In the numerical experiment, we simulate the system for~$T\!=\!30$ starting at~$\bm{x}(0)\!=\!\bm{0}$ 
with a control gain of~$k_c\!=\!40$ and~$\lambda\!=\!1$. The constants~$\beta(\tau)$,~$\gamma_f(\tau)$ 
and~$\gamma_g(\tau)$ are computed using~$\tau\!=\!10^{-4}$,~$\delta\!=\!10^{-2}$ and the 
conditions for the ultimately
bounded set \eqref{eq:Bset imp} are investigated on the set 
$\mathbb{X}\!\!=\!\{\bm{x}\!\in\!\mathbb{R}^2\!\!:\!\|\bm{x}\|\!\leq\! 2.5\}$.
For the definition of the performance specifications, we use \eqref{eq:Vnom frac} with 
$\chi_f\!=\!\chi_g\!=\!0.25$, such that their satisfaction ensures stability. We simplify the $\rho$-gaps analogously to the 
proof of \cref{th:stab} and measure the 
density of informative points for the identification of~$g(\cdot)$ by defining 
$\bar{u}_g=\max_{n\leq N} |u^{(n)}|$. Moreover, we define~$\ubar{u}_g$ such that~$90\%$ of the control 
inputs~$|u^{(n)}|$ are smaller than~$\ubar{u}_g$, and choose $M=1$. Similarly, we define~$\bar{u}_f$ 
such that~$90\%$ of the control inputs~$|u^{(n)}|$ are larger than~$\bar{u}_f$.\looseness=-1

\subsection{Results}
\label{subsec:results}

The evolution of the observed tracking error~$\sqrt{\bm{e}^T\bm{\Lambda}\bm{e}}$ 
and the maximum extension of the
ultimately bounded set~$\bar{e}_{\mathbb{B}}$ are depicted in \cref{fig:T_e_traj}. 
It can be clearly observed that the tracking error indeed satisfies the ultimate bound for
the given confidence level~$\delta=10^{-2}$ after a brief convergence period.
The curves for the ultimate 
error bound~$\bar{e}_{\mathbb{B}}$ and the tracking error $\sqrt{\bm{e}^T\!\bm{\Lambda}\bm{e}}$ 
exhibit a similar behavior with minima and maxima occurring 
at almost identical times. 

\begin{figure}[t]
	\centering 
	\vspace{0.2cm}
	\includegraphics{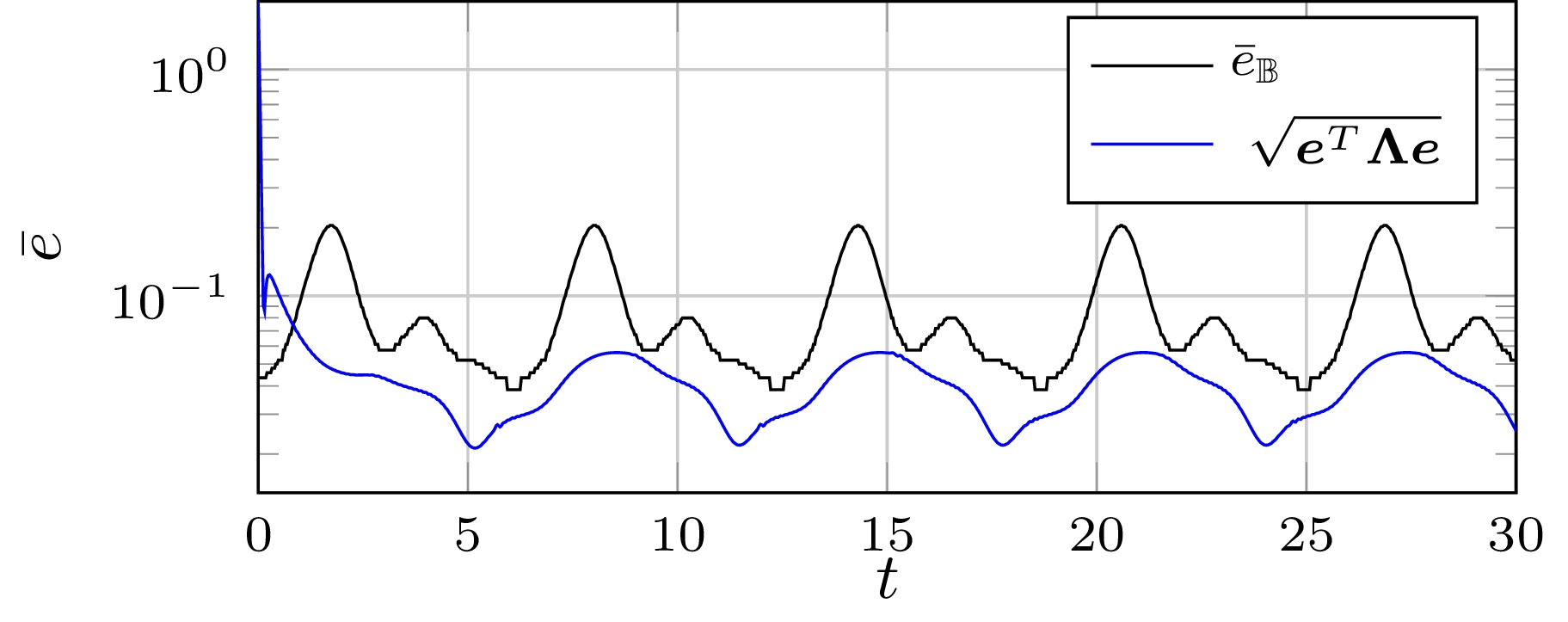}
	
%	\def\file{plots/2D_traj.txt}	
%	\vspace{0.2cm}
%	\begin{tikzpicture}
%	\begin{semilogyaxis}[xlabel={$t$},ylabel={$\bar{e}$},
%	xmin=0, ymin = 0, xmax = 30, ymax = 2, height =4cm,
%	x label style={yshift=0.2cm}, legend pos=north east]
%	\addplot[black]	table[x = T, y  = emax]{\file};
%	\legend{$\bar{e}_{\mathbb{B}}$,~$\sqrt{\bm{e}^T\bm{\Lambda}\bm{e}}$};
%	\addplot[blue]	table[x = T, y  = norme]{\file};
%	\end{semilogyaxis}
%	\end{tikzpicture}
	\caption{Radius of the ultimately bounded set (black) and tracking error 
		observed in simulation (blue). 
	}
	\label{fig:T_e_traj}
\end{figure}

Snapshots of the state trajectory as well as the corresponding
ultimate bound are depicted in \cref{fig:LyapDecrTe_Traj}. On the left hand side
the maximum of the ultimate bound occurs at the maximum of~$\rho\!_{\xstrutf f}(\cdot)$
along the reference trajectory. Taking a closer look at the training data, it can 
be seen that the training samples exhibit large control inputs~$u^{(n)}$ in this 
area, which leads to high uncertainty in the identification of~$f(\cdot)$. Therefore, 
there is a lack of training data, as indicated by~$\rho\!_{\xstrutf f}(\cdot)$. In contrast, 
the right hand side of \cref{fig:LyapDecrTe_Traj} shows the~$\rho\!_{\xstrutg g}$-gap, 
which is minimal at the minimum of the ultimate bound~$\bar{e}_{\mathbb{B}}$. This minimum
of~$\rho\!_{\xstrutg g}(\cdot)$ is a consequence of the choice of~$g(\cdot)$, which is increasing with respect to~$x_2$ 
in the considered state space~$\mathbb{X}$. Therefore, model uncertainties have a slightly weaker 
effect in the upper half plane than in the lower half plane.
Furthermore,
it can be clearly observed that~$\rho\!_{\xstrutg g}(\cdot)$ exhibits a maximum along the trajectory
at~$[-2\ 0]^T$. In fact, this lack of training data with high control inputs~$u^{(n)}$ 
causes the second local maximum in the ultimate bound at~$2\pi m+4$,~$m\in\mathbb{N}$. Finally, both~$\rho$-gaps show different behavior with increasing distance to the reference trajectory: Due to the increasing nominal derivative $\dot{V}_{\mathrm{nom}}(\cdot)$, $\rho\!_{\xstrutf f}(\cdot)$ is strongly decreasing, while 
a similar growth of the control input $\pi(\bm{x})$ compensates this effect, causing a growing $\rho\!_{\xstrutg g}(\cdot)$.\looseness=-1

\section{CONCLUSION}
\label{sec:conclusion}
This paper introduces a quality measure for training sets in data-driven 
control. We establish a relationship between the distribution of the training 
data and the ultimate bound of the tracking error for a Gaussian process-based 
control law. In contrast to state-of-the-art information-theoretical 
quantities, our measure allows us to determine the most useful data points for 
control. In future work, this can be used to design exploration algorithms that 
collect data such that the control performance of the control law is maximized. 
\looseness=-1

\begin{figure}
	\pgfplotsset{width=120\textwidth /100, compat = 1.13, 
		height =120\textwidth /100, grid= major, 
		legend cell align = left, ticklabel style = {font=\scriptsize},
		every axis label/.append style={font=\small},
		legend style = {font=\scriptsize},title style={yshift=-7pt, font = \small} }
		\center

		\begin{minipage}{0.48\columnwidth}
		\vspace{0.1cm}
		\includegraphics{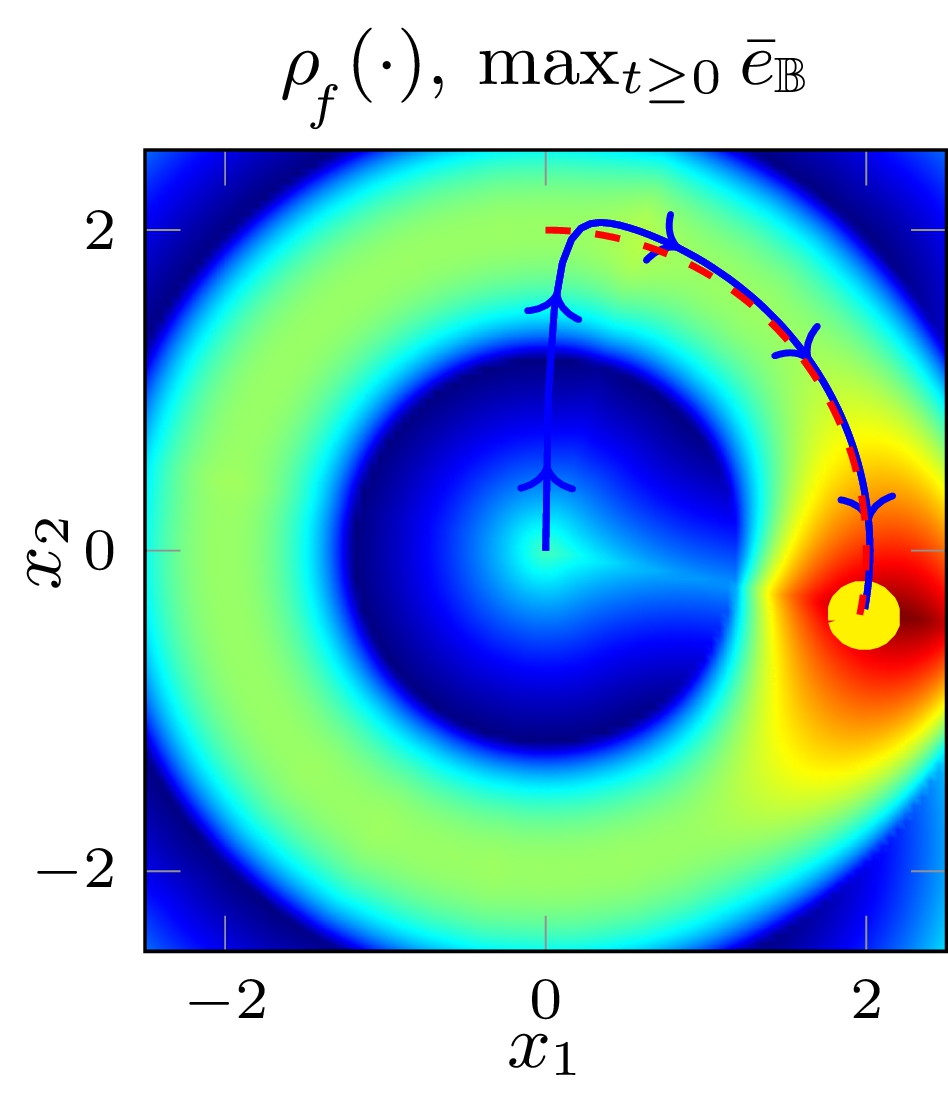}
		
%		\def\file{plots/2D_traj.txt}
%		\begin{tikzpicture}
%		\begin{axis}[view={0}{90},grid=none,enlargelimits=false, axis on top,
%		xlabel={$x_1$}, ylabel={$x_2$}, %legend pos = north west,
%		xmin=-2.5, xmax = 2.5, ymin = -2.5, ymax = 2.5,	%legend columns=5,
%		y label style={yshift=-0.4cm},
%		x label style={yshift=0.2cm},
%		%			width = 0.47\textwidth,
%		legend style={at={(1.02,0.5)},anchor=west},
%		title = {$\rho\!_{\xstrutf f}(\cdot)$,~$\max_{t\geq 0}\bar{e}_{\mathbb{B}}$}
%		]
%		\input{plots/2D_traj_rhof_surf.tex}
%		\addplot[yellow,line width =  2pt,fill=yellow, forget plot] table[x=Xhullmax_1,y=Xhullmax_2]{\file};
%		\addplot[blue,thick]	
%		table[x=Xmaxs2_1,y=Xmaxs2_2]{\file};
%		\addplot[blue,thick,forget plot, postaction={decorate,decoration={markings,
%				mark=between positions 0.1 and 0.9 step 0.2 with {\arrow{>}}}}]	
%		table[x=Xmaxs2_1,y=Xmaxs2_2]{\file};
%		\addplot[dashed,red, thick]	
%		table[x=Xdmaxs2_1,y=Xdmaxs2_2] {\file};
%		\end{axis}
%		\end{tikzpicture}
		\end{minipage}
		\begin{minipage}{0.48\columnwidth}
		\vspace{0.1cm}
		\includegraphics{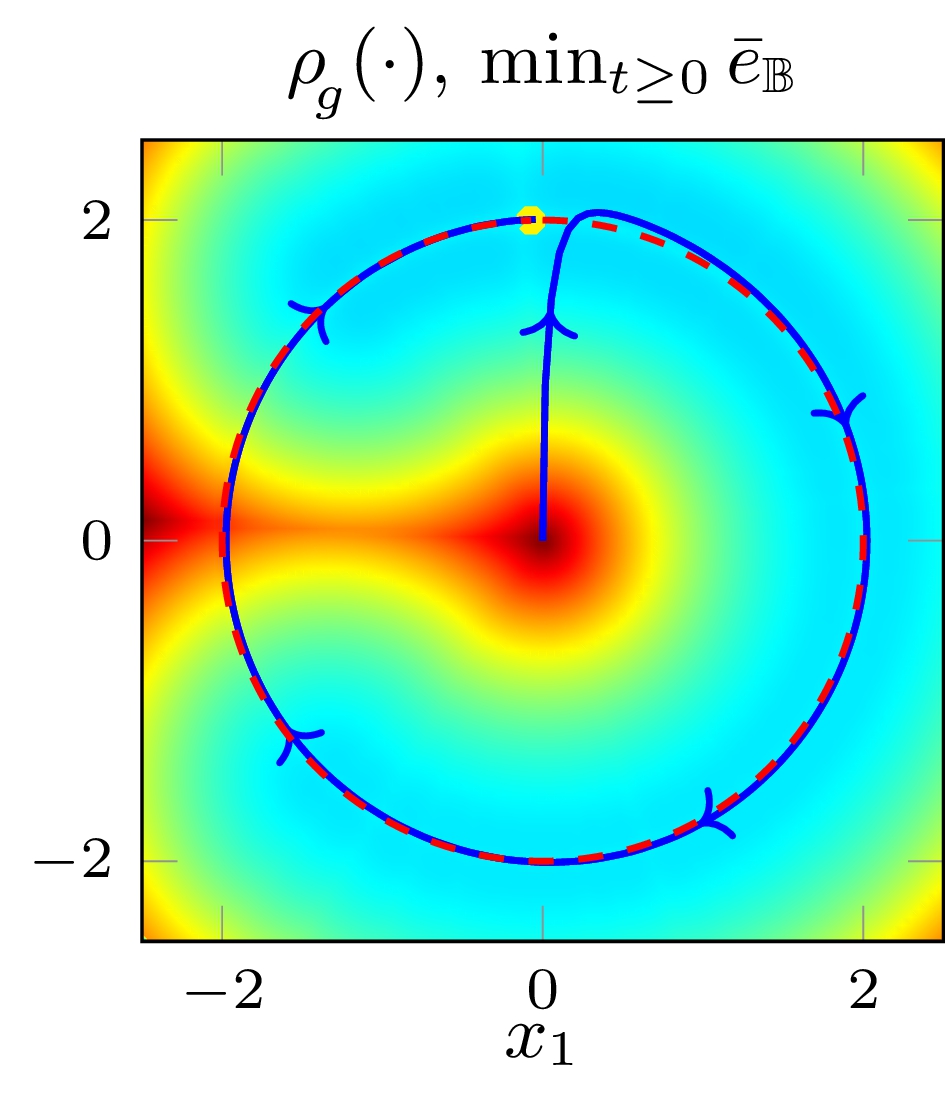}
		
%		\def\file{plots/2D_traj.txt}
%		\begin{tikzpicture}
%		\begin{axis}[view={0}{90},grid=none,enlargelimits=false, axis on top,
%		xlabel={$x_1$}, %ylabel={$x_2$}, %legend pos = north west,
%		xmin=-2.5, xmax = 2.5, ymin = -2.5, ymax = 2.5,%	legend columns=4,
%		x label style={yshift=0.2cm},
%		title = {$\rho\!_{\xstrutg g}(\cdot)$,~$\min_{t\geq 0}\bar{e}_{\mathbb{B}}$}
%		]
%		\input{plots/2D_traj_rhog_surf.tex}
%		\addplot[yellow,line width =  2pt, fill=yellow, forget plot] table[x=Xhullmin_1,y=Xhullmin_2]{\file};
%		\addplot[blue,thick]	
%		table[x=Xmins2_1,y=Xmins2_2]{\file};
%		\addplot[blue,thick,forget plot, postaction={decorate,decoration={markings,
%				mark=between positions 0.1 and 0.9 step 0.2 with {\arrow{>}}}}]	
%		table[x=Xmins2_1,y=Xmins2_2]{\file};
%		\addplot[dashed,red, thick]	
%		table[x=Xdmins2_1,y=Xdmins2_2] {\file};
%		\end{axis}
%		\end{tikzpicture}
		\end{minipage}
		
		\hspace{-1.5cm}
		\begin{minipage}{0.48\columnwidth}
		\includegraphics{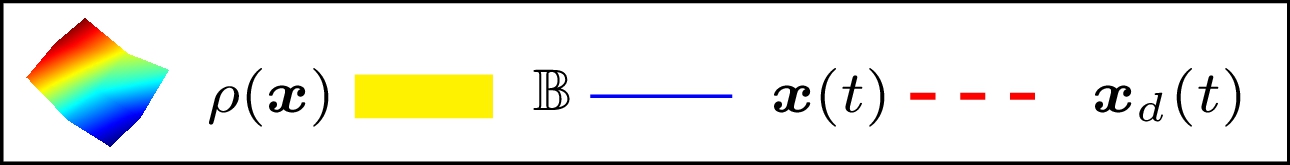}\\\tiny~\\
%		\begin{tikzpicture}
%		\begin{axis}[%
%		hide axis,
%		xmin=10,
%		xmax=100,
%		ymin=0,
%		ymax=0.4,
%		legend columns = 4,
%		]
%		\addlegendimage{surf,shader=interp, colormap/jet,};
%		\addlegendentry{~$\rho(\bm{x})$};
%		\addlegendimage{white,fill=yellow,area legend};
%		\addlegendentry{~$\mathbb{B}$};
%		\addlegendimage{blue},thick;
%		\addlegendentry{~$\bm{x}(t)$};
%		\addlegendimage{red, dashed,thick};
%		\addlegendentry{~$\bm{x}_d(t)$};
%		\end{axis}
%		\end{tikzpicture}\\\tiny~\\
		\end{minipage}
	\caption{Snapshots of the state trajectory (blue) and reference trajectory (red)
		of the system controlled with~$N=1000$ 
		training samples obtained using a high gain controller. Large ultimate bounds are strongly related
		to large values of~$\rho\!_{\xstrutf f}(\cdot)$ and~$\rho\!_{\xstrutg g}(\cdot)$ (red background).\looseness=-1
		}
	\label{fig:LyapDecrTe_Traj}
\end{figure}

\bibliographystyle{IEEEtran}
\bibliography{IEEEabrv,myBib}

\end{document}